\date{}
\title{Spanning Tree Enumeration in 2-trees: Sequential and Parallel Perspective}
\author{Vandhana.C${^1}$ and S.Hima Bindhu${^2}$ and P.Renjith${^3}$ and N.Sadagopan${^3}$ and B.Supraja${^2}$} 
\institute{${^1}$ Department of Information Technology, National Institute of Technology Karnataka, Surathkal, India.\\ ${^2}$ Department of Information Science and Technology, College of Engineering, Guindy, Anna University, India.\\
${^3}$ Indian Institute of Information Technology, Design and Manufacturing, Kancheepuram, Chennai, India. \\
\email{sadagopan@iiitdm.ac.in}}
\begin{document}
\maketitle
\begin{abstract}
For a connected graph, a vertex separator is a set of vertices whose removal creates at least two components. A vertex separator $S$ is minimal if it contains no other separator as a strict subset and a minimum vertex separator is a minimal vertex separator of least cardinality.  A {\em clique} is a set of mutually adjacent vertices.  A 2-tree is a connected graph in which every maximal clique is of size three and every minimal vertex separator is of size two.  A spanning tree of a graph $G$ is a connected and an acyclic subgraph of $G$.  In this paper, we focus our attention on two enumeration problems, both from sequential and parallel perspective.  In particular, we consider listing all possible spanning trees of a 2-tree and listing all perfect elimination orderings of a chordal graph.  As far as enumeration of spanning trees is concerned, our approach is incremental in nature and towards this end, we work with the construction order of the 2-tree, i.e. enumeration of $n$-vertex trees are from $n-1$ vertex trees, $n \geq 4$. Further, we also present a parallel algorithm for spanning tree enumeration using $O(2^n)$ processors.  To our knowledge, this paper makes the first attempt in designing a parallel algorithm for this problem.  We conclude this paper by presenting a sequential and parallel algorithm for enumerating all Perfect Elimination Orderings of a chordal graph. 
\end{abstract}
\section{Introduction}
Enumeration of sets satisfying specific properties is a fascinating problem in the field of computing and it has attracted many researchers in the past.  Properties looked at in the literature are spanning trees \cite{uno,read}, minimal vertex separators \cite{kloks}, cycles \cite{read}, maximal independent sets \cite{papadi}, etc.  Interestingly, these problems find applications in Computer Networks and Circuit Analysis \cite{read,minty}.  In this paper, we revisit enumeration of spanning trees restricted to 2-trees.  The algorithms available in the literature either follow back-tracking approach to list all spanning trees or list spanning trees using fundamental cycles \cite{read,gabow}.  As far as enumeration problems are concerned, asking for a polynomial-time algorithm to enumerate all desired sets is quite unlikely as the number of such sets is exponential in the input size.  Interestingly, the results reported in the literature list all spanning trees with polynomial delay between consecutive spanning trees and this is the best possible for enumeration problems. Having highlighted the inherent difficulty of enumeration problems, a natural approach to speed up enumeration is to design parallel algorithms wherein more than one tree is listed at a time.  Although parallel algorithms have received much attention in the past for other combinatorial problems such as sorting \cite{kruskal}, planarity testing \cite{planar}, connectivity augmentation \cite{hsu}, surprisingly, no parallel algorithm exists for enumeration problems.  Most importantly, effective parallelism can be achieved for enumeration problems compared to other combinatorial optimization problems with the help of parallel algorithmics.  To the best of our knowledge, this paper presents the first parallel algorithm for enumeration of spanning trees in 2-trees.  We first present a new sequential algorithm for listing all spanning trees of a 2-tree. Our novel approach is iterative in nature in which spanning trees of $n$-vertex 2-tree is generated  using spanning trees of $n-1$ vertex 2-tree.  This approach is fundamentally different from the results reported in \cite{read,gabow}.  Most importantly, the overall structure of our sequential algorithm can be easily extended to design parallel algorithm for listing all spanning trees of a 2-tree.  In particular, using CREW PRAM model and with the help of $O(2^n)$ processors, we present a parallel algorithm to list all spanning trees in a 2-tree. We actually present a framework and we believe that this can be extended to $k$-trees, chordal graphs and other graphs which have vertex elimination orderings.  Our sequential approach looks at enumeration as a two stage process wherein stage-1 enumerates all spanning trees of $n$ vertex 2-tree with $n^{th}$ vertex being a leaf and in stage-2, it generates all spanning trees where $n^{th}$ vertex is a non-leaf.  We also highlight that this two stage process naturally yields tight parallelism and we believe that this is the main contribution of this paper. Further, each processor incurs $O(n)$ time, linear in the input size to output a tree. As far as bounds are concerned, Cayley's formula \cite{west} shows that the number of spanning trees of an $n$-vertex graph is upper bounded by $n^{n-2}$ and this bound is tight for complete graphs.  Alternately, we can also get the number using Kirchoff's Matrix Tree theorem \cite{west}.  In this paper, for 2-trees we present a recurrence relation to capture the number of spanning trees of an $n$-vertex 2-tree.  Our initial motivation was to check whether 2-trees has polynomial number of spanning trees; however, we show that there are $\Omega(2^n)$ spanning trees in any $n$-vertex 2-tree. Moreover, it is this bound that helped us to fix the number of processors while designing parallel algorithms. \\
{\bf Road Map:} In Section \ref{sequential}, we present a new sequential algorithm to list all spanning trees of a 2-tree.  Our two stage algorithm along with combinatorics is presented in Section \ref{sequential}.  In Section \ref{parallel}, we design and analyze parallel algorithm for spanning tree enumeration restricted to 2-trees. \\
\subsection{Graph-theoretic Preliminaries}
Notation and definitions are as per \cite{west,golu}.  Let $G =(V,E)$ be an undirected connected graph where $V(G)$ is the set of vertices and $E(G) \subseteq \{\{u,v\}~|~ u,v \in V(G)$, $u \not= v \}$. For $v \in V(G)$, $N_G(v)=\{u~|~ \{u,v\} \in E(G)\}$ and $d_G(v)=|N_G(v)|$ refers to the degree of $v$ in $G$.  For $S \subset V(G)$, $G[S]$ denotes the graph induced on the set $S$ and $G \setminus S$ is the induced graph on the vertex set $V(G) \setminus S$.  A vertex separator of a graph $G$ is a set $S \subseteq V(G)$ such that $G \setminus S$ has more than one connected component and $S$ is minimal if there does not exist $S' \subset S$ such that $S'$ is also a vertex separator.  A minimum vertex separator is a minimal vertex separator of least size.  A cycle is a connected graph in which the degree of each vertex is two.  A tree is a connected and an acyclic graph.  A set $S \subseteq V(G)$ is a clique if for all $u,v \in S, \{u,v\} \in E(G)$ and $S$ is maximal if there is no clique $S' \supset S$.  A graph $G$ is a 2-tree if every maximal clique in $G$ is of size three and every minimal vertex separator in $G$ is of size two.  2-trees can be defined iteratively as follows: A clique on 3-vertices (3-clique) is a 2-tree and if $H$ is a 2-tree, then the graph $H'=H + v$ obtained from $H$ by adding $v$ such that $N_{H'}(v)$ is an edge (2-clique) in $H$ is also a 2-tree.  An example is shown in Figure \ref{2tree}.  A 2-tree $G$ also has a vertex elimination ordering which is an ordering $(v_1,v_2,\ldots,v_n)$ such that for any $v_i$, $N_H(v_i)$ in the subgraph $H$ induced on the set $\{v_i,v_{i+1},\ldots,v_n\}$ is a 2-clique. We call such $v_i$ as 2-simplicial and the associated ordering, a 2-simplicial ordering of $G$.  This is a special case of perfect elimination ordering (PEO) which is an ordering $(v_1,v_2,\ldots,v_n)$ such that for any $v_i$, $N_H(v_i)$ in the subgraph $H$ induced on the set $\{v_i,v_{i+1},\ldots,v_n\}$ is a clique. A graph is chordal if every cycle $C$ of length at least 4 has a chord in $C$, an edge joining a pair of non-consecutive vertices in $C$ of $G$.  It is well known that chordal graphs have perfect elimination ordering. We call $N_H(v_i)$ as the {\em higher neighbourhood} of $v_i$.  Note that for a 2-tree, for any $v_i$, the higher neighbourhood is always a 2-clique (edge). A 2-tree and its 2-simplicial ordering is illustrated in Figure \ref{2tree}.
\begin{figure}
\begin{center}
\includegraphics[scale=0.4]{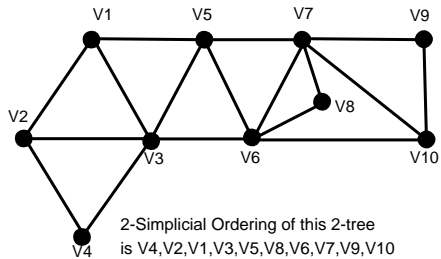} 
\caption{A 2-tree and its 2-simplicial ordering} \label{2tree}
\end{center}
\vspace{-0.5cm}
\end{figure}
\subsection{Parallel Computing Preliminaries}
In this paper, we work with Parallel Random Access Machine (PRAM) Model \cite{JaJa}.  It consists of a set of $n$ processors all connected to a shared memory.  The time complexity of a parallel algorithm is measured using $O($number of processors $\times$ time for each processor$)$.  This is also known as processor-time product.  Access policy must be enforced when two processors are trying to Read/Write into a cell. This can be resolved using one of the following strategies:
\begin{itemize}
\item Exclusive Read and Exclusive Write (EREW): Only one processor is allowed to read/write into a cell
\item Concurrent Read and Exclusive Write (CREW): More than one processor can read a cell but only one is allowed to write at a time
\item Concurrent Read and Concurrent Write (CRCW): All processors can read and write into a cell at a time.
\end{itemize}
In our work, we restrict our attention to CREW PRAM model.  For a problem $Q$ with input size $N$ and $p$ processors,  the speed-up is defined as $S_p(N)= \frac{T_1(N)}{T_p(N)}$, where $T_p(N)$ is the time taken by the parallel algorithm on a problem size $N$ with $p$ ($p \geq 2$) processors and $T_1(N)$ is the time taken by the fastest sequential algorithm (in this case $p=1$) to solve $Q$.  The efficiency is defined as $E_p(N)=\frac{S_p(N)}{p}$.
\section{Enumeration of Spanning Trees for a 2-tree: A New Sequential Approach}
\label{sequential}
In this section, we present a new sequential approach for the enumeration of all spanning trees in 2-trees.  Our new sequential algorithm is iterative in nature and has got three phases.  In Phase-1, we obtain a vertex elimination ordering by removing all the 2-simplicial vertices of the 2-tree one by one until we are left with a 3-clique, which is considered to be our base graph.  This will give us the 2-simplicial ordering.  In Phase-2, we construct the spanning trees for our base graph.  Since our base graph is a 3-clique, we will get three spanning trees.  Using these three spanning trees, and with the help of 2-simplicial ordering, in Phase-3, we construct all the other spanning trees iteratively. In particular, we first consider the last four vertices of the ordering, namely, $\{v_4,v_3,v_2,v_1\}$ and enumerate all spanning trees for this set.  Using this set and in the next iteration, we generate all spanning trees of the set $\{v_5,v_4,\ldots,v_1\}$ and so on. In Phase-3, at the beginning of $i^{th}$ iteration, we have all spanning trees of the 2-tree induced on the set $\{v_{i-1},\ldots,v_1\}$ and using this set, we first generate spanning trees of $i$-vertex 2-tree with $v_i$ being a leaf, followed by generation of spanning trees where $v_i$ is a non-leaf.  An illustration is given in Figure \ref{algotrace}. \\
\noindent
{\bf Trace of Algorithm \ref{mainalgo}:} {\tt Fig:1a} of Figure \ref{algotrace} shows a 2-tree on 5-vertices.  The base 2-tree consists of 3-vertices and there are 3-different spanning trees for the base graph which are shown in {\tt Fig: 2a-2c}; this task is done by line 5 of Algorithm \ref{mainalgo}. Lines 8-11 of Algorithm \ref{mainalgo} generate the sequence of spanning trees with $v_4$ as a leaf, which are illustrated in  {\tt Fig: 3a-3f}.  Subsequently, spanning trees with $v_4$ as a non-leaf are generated by lines 12-18, which are illustrated in {\tt Fig: 4a-4b}.  In the next iteration, we first generate spanning trees with $v_5$ as a leaf, which are shown in {\tt Fig: 5a-5p}.  Finally, {\tt Fig: 6a-6e} illustrate the set of spanning trees in which $v_5$ is a non-leaf. 
\begin{algorithm}
\caption{2-simplicial ordering of a 2-tree {\em 2-simplicial-ordering ($G$)}}
\label{ordering}
\begin{algorithmic}[1] 
\STATE{Let $S= \phi$ and $n = |V(G)|$. {\tt /* A set to maintain ordering of vertices in $G$ */}}
\WHILE{$|V(G)| >=4$}
\STATE{Identify a vertex $u_n$ of degree two in $G$ and add $u_n$ to $S$, $S=S \cup \{u_n\}$}
\STATE{Remove $u_n$ from $G$, i.e., $V(G)=V(G) \setminus \{u_n\}$, $E(G)=E(G)\setminus \{\{u_n,x\},\{u_n,y\}\}$ where $x,y \in N_G(u_n)$. $n \leftarrow n-1$}
\ENDWHILE
\STATE{$S=S \cup \{v_3,v_2,v_1\}$ and return $S$}
\end{algorithmic}
\end{algorithm}
\vspace{-1cm}
\begin{algorithm}
\caption{Spanning Trees of a 3-clique (Base Graph of a 2-tree) {\em Spanning-trees-base-graph()}}
\label{basegraph}
\begin{algorithmic}[1]
\STATE{Let $V(H)=\{v1,v2,v3\}$}
\STATE{Construct spanning trees $T_1, T_2, T_3$ such that $V(T_i)=V(H)$ for all $1 \leq i \leq 3$ and $E(T_1)=\{\{v1,v2\},\{v1,v3\}\}$, $E(T_2)=\{\{v1,v2\},\{v2,v3\}\}$, $E(T_3)=\{\{v1,v3\},\{v2,v3\}\}$ }
\STATE{Augment $T_1,T_2,T_3$ to $ENUM$ {\tt /* $ENUM$ contains the set of spanning trees */}}
\end{algorithmic}
\end{algorithm}
\begin{algorithm}
\caption{Spanning Tree Enumeration in 2-trees {\em Enumerate-Spanning-trees($G$)}}
\label{mainalgo}
\begin{algorithmic}[1]
\STATE{{\tt Input:} A 2-tree $G$}
\STATE{{\tt Output:} List all spanning trees of $G$. {\tt /* The set $ENUM$ contains all spanning trees of $G$  */}}
\STATE{Get an ordering of vertices in G; {\em 2-simplicial-ordering($G$).}}
\STATE{Let $\{v_n,v_{n-1}, \ldots, v3,v2,v1\}$ be the 2-simplicial ordering.}
\STATE{Call {\em Spanning-trees-base-graph()} to get spanning trees for the base graph.}
\FOR{$i$ = $4$ to $n$}
\STATE{Let $\{x,y\}$ be the higher neighbourhood of $v_i$. }
\FOR{each tree $T$ in $ENUM$}
\STATE{Add the edge $\{v_i,x\}$ to $T$ to get a tree $T_1$ and
add the edge $\{v_i,y\}$ to $T$ to get a tree $T_2$.}
\STATE{Augment $T_1,T_2$ to $ENUM$}
\ENDFOR
\FOR{each tree $T$ in $ENUM$ such that $\{x,y\} \notin E(T)$}
\STATE{Add $v_i$ to $T$ to get a graph $H$ i.e., $V(H)= V(T) \cup \{v_i\}$ and $E(H)=E(T)\cup \{\{v_i,x\},\{v_i,y\}\}$}
\STATE{Let $C$ be the cycle in $H$ of length $k \geq 4$ with $V(C)=\{w_1=v_i,w_2=x,w_3,\ldots,w_k=y\}$ }
\FOR{each edge $\{w_i,w_j\}$ in $C$ such that $i \neq 1$,$j \neq 1$}
\STATE{Delete $\{w_i,w_j\}$ to get a tree $T'$ from $H$ and augment $T'$ to $ENUM$ }
\ENDFOR
\ENDFOR
\ENDFOR
\end{algorithmic}
\end{algorithm}
\begin{figure}
\vspace{-0.5cm}
\begin{center}
\includegraphics[scale=0.5]{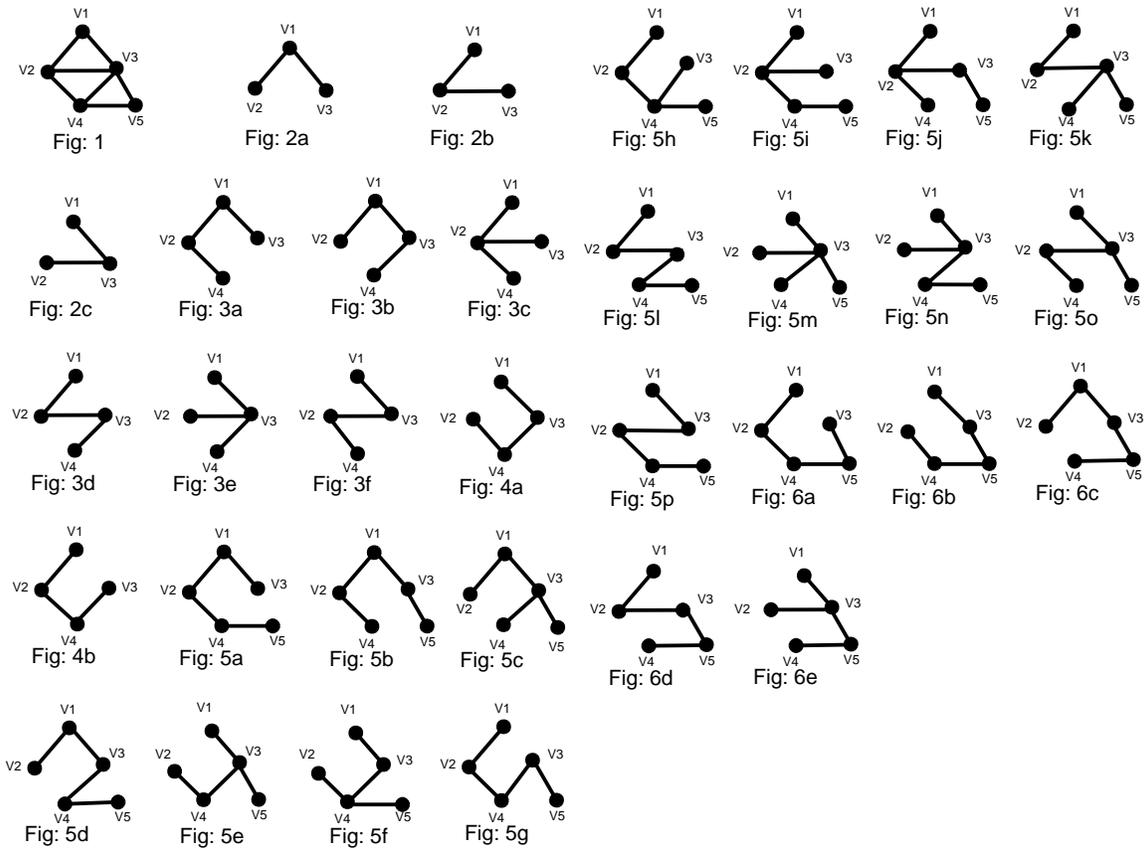} 
\caption{Trace of Spanning Tree Enumeration Algorithm}\label{algotrace}
\end{center}
\vspace{-0.8cm}
\end{figure}
\begin{theorem}
For a 2-tree $G$, Algorithm \ref{mainalgo} lists all spanning trees of $G$.
\end{theorem}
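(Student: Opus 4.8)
\noindent
The plan is to establish, by induction on $i$ ($3\le i\le n$), the loop invariant that at the end of the $i$-th pass of the main \texttt{for} loop of Algorithm~\ref{mainalgo} — with the base case $i=3$ produced by Algorithm~\ref{basegraph}, and with stale trees from the previous pass discarded — the set $ENUM$ equals \emph{exactly} the family of all spanning trees of $G_i:=G[\{v_1,\dots,v_i\}]$. Setting $i=n$ then yields the theorem, since $G_n=G$. Throughout I will use two standard facts about the construction order of a $2$-tree: each $G_i$ with $i\ge 3$ is again a $2$-tree, and $v_i$ is $2$-simplicial in $G_i$ with $N_{G_i}(v_i)=\{x,y\}$, so in \emph{any} spanning tree of $G_i$ the vertex $v_i$ has degree $1$ or $2$. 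I also read ``for each tree $T$ in $ENUM$'' in lines~8 and~12 as ranging over the contents of $ENUM$ at the start of the $i$-th pass, i.e.\ over the spanning trees of $G_{i-1}$. The base case is immediate: $G_3$ is a $3$-clique and its only spanning trees are the three paths $T_1,T_2,T_3$ output by Algorithm~\ref{basegraph}.

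For the inductive step, assume $ENUM$ is the set of all spanning trees of $G_{i-1}$. I would first handle \emph{soundness}, which is routine: for $T\in ENUM$ the graphs $T+\{v_i,x\}$ and $T+\{v_i,y\}$ are pendant-vertex additions, hence spanning trees of $G_i$ with $v_i$ a leaf; and for $T$ with $\{x,y\}\notin E(T)$, the graph $H$ of line~13 is connected with $i$ vertices and $i$ edges, so it contains a unique cycle, which is the cycle $C$ through $v_i$ and the $x$--$y$ path of $T$ — this path has length $\ge 2$ exactly because $\{x,y\}\notin E(T)$, so $|C|\ge 4$ and line~15 is non-vacuous — and deleting any edge of $C$ not incident to $v_i$ leaves a connected graph on $i$ vertices with $i-1$ edges in which $v_i$ still has degree $2$.

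The substantive direction is \emph{completeness}. Let $T^{*}$ be a spanning tree of $G_i$. If $v_i$ is a leaf of $T^{*}$, then $T^{*}\setminus v_i$ is a spanning tree of $G_{i-1}$, so it belongs to $ENUM$ by the induction hypothesis, and re-attaching $v_i$ to its unique neighbour in $\{x,y\}$ reconstructs $T^{*}$ in lines~8--11. If $v_i$ is internal, it has degree $2$ with neighbours $x$ and $y$, and $F:=T^{*}\setminus v_i$ is a spanning forest of $G_{i-1}$ with exactly two components $A\ni x$ and $B\ni y$ (two because $\deg_{T^{*}}(v_i)=2$; separate because an $x$--$y$ path in $F$ would close a cycle in $T^{*}$). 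The key claim is that $G_{i-1}$ has an edge $e\ne\{x,y\}$ with one endpoint in $A$ and the other in $B$. Granting it, $T:=F+e$ is a spanning tree of $G_{i-1}$ with $\{x,y\}\notin E(T)$, hence $T\in ENUM$ by induction; and in the graph $H$ that line~13 builds from this $T$, the edge $e$ lies on the unique cycle $C$ and is not incident to $v_i$, so its deletion in line~16 returns precisely $F+\{v_i,x\}+\{v_i,y\}=T^{*}$. To prove the claim I would first record the auxiliary fact that in a $2$-tree on at least three vertices every edge lies in a triangle (easy induction along the construction order: the new edges $\{v,p\},\{v,q\}$ lie in the triangle $vpq$, and old edges inherit a triangle from the smaller $2$-tree). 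Applying it to $\{x,y\}$ in $G_{i-1}$ gives a common neighbour $z$ of $x$ and $y$; since $z\in A$ or $z\in B$, one of $\{y,z\},\{x,z\}$ crosses the cut $(A,B)$ and is distinct from $\{x,y\}$, establishing the claim.

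Assembling soundness and completeness, after the $i$-th pass $ENUM$ is exactly the set of spanning trees of $G_i$, and the case $i=n$ proves the theorem. I expect the main obstacle to be precisely the completeness of the internal-vertex case — showing that the ``predecessor'' spanning tree $T$ of $G_{i-1}$ from which Algorithm~\ref{mainalgo} rebuilds a given $T^{*}$ really exists \emph{and} avoids the edge $\{x,y\}$ — and this is the one place where the structure of $2$-trees (every edge in a triangle) genuinely enters. A minor point, not affecting the statement as phrased: Algorithm~\ref{mainalgo} may emit the same spanning tree more than once, since different crossing edges $e$ can yield the same $T^{*}$; the theorem only asserts that every spanning tree is listed, so this is harmless here, but it is something to keep in mind for the accompanying counting recurrence.
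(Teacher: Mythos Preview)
Your proof is correct and follows the same inductive skeleton as the paper's: induction along the $2$-simplicial ordering, with the leaf/non-leaf case split on $v_i$. Where the two diverge is precisely at the point you flagged as the main obstacle. The paper's own argument for the non-leaf case essentially just narrates what lines~12--17 do and then asserts that trees $T$ with $\{x,y\}\in E(T)$ can be ignored because the tree obtained by adding $v_n$ and deleting $\{x,y\}$ ``is already generated by our algorithm during Case~2''; it never actually exhibits, for an arbitrary spanning tree $T^{*}$ of $G_i$ with $v_i$ internal, a predecessor $T\in ENUM(G_{i-1})$ with $\{x,y\}\notin E(T)$ from which $T^{*}$ is recovered. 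Your proof supplies exactly this missing step: after removing $v_i$ from $T^{*}$ you use the fact that every edge of a $2$-tree lies in a triangle to find a crossing edge $e\neq\{x,y\}$, and then $T=F+e$ is the required predecessor that the algorithm will process. So your argument is not merely a rephrasing --- it plugs the one real gap in the paper's completeness proof, and your remark about possible repetitions matches what the paper only concedes later in Section~2.1.
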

\begin{proof}
We present a proof by mathematical induction on $|V(G)|$. {\em Base:} $|V(G)|=3$.  The only 2-tree on 3-vertices is a clique on 3-vertices.  There are exactly three spanning trees and Algorithm \ref{basegraph} correctly outputs all three spanning trees of $G$.  Therefore, the claim is true for base case. {\em Hypothesis:} Assume that our claim is true for 2-trees with less than $n$ vertices ($n \geq 3$).  {\em Induction Step:} Let $G$ be a 2-tree on $n \geq 4$ vertices.  Consider the 2-simplicial ordering $(v_n,\ldots,v_1)$ of $G$.  Clearly the higher neighbourhood $\{x,y\}$ of $v_n$ is a 2-clique.  Consider the graph $G'$ obtained from $G$ by removing $v_n$. i.e., $V(G')=V(G) \setminus \{v_n\}$ and $E(G')=E(G) \setminus \{\{v_n,x\},\{v_n,y\}\}$.  Clearly $G'$ has less than $n$ vertices and by our induction hypothesis, Algorithm \ref{mainalgo} correctly outputs all spanning trees of $G'$.  We now argue that our algorithm indeed enumerates all spanning trees for $G$ as well.  We observe that in the set $ENUM$ (the set of all spanning trees of $G$), $v_n$ appears as a leaf or a non-leaf.  Based on this observation, we consider two cases to complete our induction. {\em Case 1:} $v_n$ is a leaf.  Note that the lines 8-11 of Algorithm \ref{mainalgo} considers each spanning tree on $n-1$ vertices ($n \geq 4$) and augments either the edge $\{v_n,x\}$ or $\{v_n,y\}$ to get a spanning tree on $n$ vertices ($n \geq 4$) and in either augmentation, $v_n$ is a leaf node. {\em Case 2:} $v_n$ is a non-leaf.  Lines 12-17 of Algorithm \ref{mainalgo} handle Case 2.  For each spanning tree $T$ in $ENUM$ where $\{x,y\} \notin E(T)$, we augment $v_n$ to $T$.  This creates a cycle $C$ in the associated graph and to get a spanning tree with $v_n$ as a non-leaf, Algorithm \ref{mainalgo} removes an edge from $C$ other than the edges incident on $v_n$ to get a new spanning tree of $G$.  To ensure that $v_n$ is non-leaf, the edges incident on $v_n$ are not removed by the algorithm.  Moreover, there are $|C|-2$ more spanning trees constructed out of $T$ with $v_n$ as the non-leaf. Further, the above observation is true for each $T$ in $ENUM$.  Lines 12-17 of the algorithm identifies all such spanning trees by considering each $T$ in $ENUM$.  Note that in $ENUM$, $T$ with $\{x,y\} \in E(G)$ is not considered for discussion as the spanning trees generated by $T$ is taken care by Case 2 itself,  i.e. when $v_n$ is augmented to such $T$, it creates a cycle of length three with the set $\{v_n,x,y\}$ and to get a spanning tree $T'$ with $v_n$ as the non-leaf, remove the edge $\{x,y\}$.  We observe that the spanning tree $T'$ with $v_n$ as non-leaf is already generated by our algorithm during Case 2.  This completes our induction and therefore, our claim follows. \qed
\end{proof}
\subsection{Bounds on the Number of Spanning Trees}
Our initial motivation was to study whether 2-trees have polynomial number of spanning trees.  Interestingly, we observed that in any 2-tree, the number of spanning trees is $\Omega(2^n)$.   Further, we will also present the exact bound using recurrence relations.  We also highlight that the spanning trees generated by lines 8-11 of our algorithm are without repetition and repetition is due to lines 12-17 of our algorithm.  Nevertheless, for any 2-tree, all spanning trees are generated by Algorithm \ref{mainalgo}.  The lower bound presented in the next observation helped us in fixing the number of processors in parallel algorithmics.  
\begin{theorem}
\label{lb}
Let $G$ be a 2-tree. The number of spanning trees is $\Omega(2^n)$.
\end{theorem}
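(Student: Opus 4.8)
The plan is to set up a simple recurrence on the number of spanning trees that exploits exactly the ``leaf'' case (Case~1) of Algorithm~\ref{mainalgo}. For a 2-tree $G$ on $n$ vertices write $\tau(G)$ for the number of spanning trees of $G$, let $(v_n,\ldots,v_1)$ be a 2-simplicial ordering, let $G' = G \setminus \{v_n\}$ be the 2-tree on the remaining $n-1$ vertices, and let $\{x,y\}$ be the higher neighbourhood of $v_n$. First I would establish $\tau(G) \geq 2\,\tau(G')$: given any spanning tree $T$ of $G'$, both $T + \{v_n,x\}$ and $T + \{v_n,y\}$ are spanning trees of $G$ (each is connected, spanning, and remains acyclic since $v_n$ has degree one in it), and in each of them $v_n$ is a leaf. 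This produces $2\,\tau(G')$ spanning trees of $G$, and I claim they are pairwise distinct: for a fixed $T$ the two trees differ in the unique neighbour of $v_n$; and for $T_1 \neq T_2$, deleting the leaf $v_n$ from the corresponding augmented trees recovers $T_1$ and $T_2$ respectively, so the augmented trees cannot coincide.

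With the recurrence $\tau(G) \geq 2\,\tau(G')$ in hand, the second step is a routine induction along the 2-simplicial ordering down to the base $3$-clique $H$, for which $\tau(H)=3$ by Algorithm~\ref{basegraph}. Unrolling the recurrence $n-3$ times gives $\tau(G) \geq 3 \cdot 2^{\,n-3} = \Omega(2^n)$, which is exactly the claim.

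The only point that genuinely needs care is the distinctness (i.e. no over-counting) of the $2\,\tau(G')$ leaf-trees, and this is handled by the ``delete the leaf $v_n$'' projection described above: it is the inverse of the leaf-augmentation, hence injective, so the count is exact rather than an upper estimate. Everything else --- acyclicity and connectivity of $T + \{v_n,x\}$, the base value $\tau(H)=3$, and the arithmetic of the unrolled recurrence --- is immediate. I would also remark that this argument uses only lines~8--11 of Algorithm~\ref{mainalgo}, so the spanning trees produced by the non-leaf phase (lines~12--17) can only increase the count; this is consistent with the earlier observation that the leaf phase generates spanning trees without repetition.
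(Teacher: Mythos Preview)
Your argument is correct and follows essentially the same route as the paper: both set up the recurrence $T(n)\geq 2\,T(n-1)$ via the leaf-augmentation of $v_n$ by each of its two higher neighbours, use the base value $T(3)=3$, and unroll to obtain $T(n)\geq 3\cdot 2^{\,n-3}=\Omega(2^n)$. Your version is in fact a bit more careful, since you explicitly verify the distinctness of the $2\,\tau(G')$ leaf-trees via the ``delete the leaf $v_n$'' projection, a point the paper leaves implicit.
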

\begin{proof}
We present a proof using counting argument.  Let $T(n)$ denote the number of spanning trees on $n$-vertex 2-trees.  Note that $T(3)=3$, which is the number of spanning trees of a 3-clique.  At every iteration $i$, $4 \leq i \leq n$, the vertex $v_i$ is adjacent to either $x$ or $y$, where $\{x,y\}$ is the higher neighbourhood of $v_i$ with respect to the 2-simplicial ordering.  This implies that for each spanning tree on $(i-1)$ vertices, the addition of $v_i$ creates at least two more spanning trees and in particular, two spanning trees with $v_i$ as the leaf node.  Therefore, we conclude that $T(n) \geq 2 \times T(n-1)$, $n \geq 4$ and $T(3)=3$.  Solving this recurrence relation, we get $T(n) \geq \frac{3}{8} \times 2^n = \Omega(2^n)$.  \qed
\end{proof}
The next theorem presents an upper bound on the number of spanning trees generated by our algorithm.  
\begin{theorem}
For a 2-tree $G$, the number of spanning trees is $T(n) \leq 2.T(n-1) + (|C_n| - 2).T(n-1)$, where $|C_n|$ denotes the length of the cycle created during iteration $n$ and $n \geq 4$, $T(3)=3$. 
\end{theorem}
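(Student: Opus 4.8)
The plan is to read the inequality as a bound on the number of spanning trees that Algorithm~\ref{mainalgo} appends to $ENUM$ during iteration $n$ of its main loop, and then to invoke the correctness theorem (that Algorithm~\ref{mainalgo} lists all spanning trees of $G$) to transfer this bound to the true count $T(n)$. Concretely, set $G' = G \setminus \{v_n\}$; this is an $(n-1)$-vertex 2-tree, so by the correctness theorem applied to $G'$ the set $ENUM$ at the start of iteration $n$ is exactly the collection of the $T(n-1)$ spanning trees of $G'$, and at the end of iteration $n$ it contains every spanning tree of $G$. It therefore suffices to count, with multiplicity, the trees appended during iteration $n$; since $T(n)$ counts distinct spanning trees, any upper bound on this multiset size is an upper bound on $T(n)$.

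First I would account for lines~8--11. With $\{x,y\}$ the higher neighbourhood of $v_n$, these lines take each of the $T(n-1)$ trees $T$ already in $ENUM$ and append the two trees $T + \{v_n,x\}$ and $T + \{v_n,y\}$, contributing at most $2\cdot T(n-1)$ trees (in fact exactly that many distinct trees, as the paper already observes, since $v_n$ is a leaf attached to a specified neighbour in each). These are precisely the spanning trees of $G$ in which $v_n$ is a leaf: because $d_G(v_n)=2$, a spanning tree of $G$ has $v_n$ as a leaf iff it contains exactly one of $\{v_n,x\},\{v_n,y\}$, and deleting that pendant edge together with $v_n$ returns a spanning tree of $G'$.

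Next I would account for lines~12--17. For each tree $T$ in $ENUM$ with $\{x,y\}\notin E(T)$ — at most $T(n-1)$ of them — the algorithm adds $v_n$ with both edges $\{v_n,x\},\{v_n,y\}$, creating the unique cycle $C$ formed by $v_n$, these two edges, and the $x$--$y$ path in $T$, and then deletes in turn each of the $|C|-2$ edges of $C$ not incident on $v_n$, appending one tree per deletion. Interpreting $|C_n|$ as the maximum length of such a cycle over all $T$ handled in iteration $n$, each $T$ contributes at most $|C_n|-2$ trees, for a total of at most $(|C_n|-2)\cdot T(n-1)$; and these realize all spanning trees of $G$ in which $v_n$ is a non-leaf, since such a tree (which must contain both $\{v_n,x\}$ and $\{v_n,y\}$) restricts, after deleting one of the two edges at $v_n$ and then $v_n$, to a spanning tree $T$ of $G'$ from which it is recovered by exactly this edge-deletion step. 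Summing the two contributions yields $T(n)\le 2\,T(n-1) + (|C_n|-2)\,T(n-1)$, and $T(3)=3$ is immediate since the base graph is a $3$-clique.

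The step I expect to demand the most care is the bookkeeping in the non-leaf case. One must verify that discarding, in line~12, the trees $T$ with $\{x,y\}\in E(T)$ loses no spanning tree of $G$: every spanning tree of $G$ with $v_n$ non-leaf that would arise from such a $T$ (by deleting $\{x,y\}$) is, as argued in the proof of the correctness theorem, already produced from some other $T$ during the same case. This is exactly why the claim is an inequality and why the true number of distinct spanning trees can be strictly smaller than the right-hand side — repetitions stem only from lines~12--17. I would also note the mild convention that the inner \textbf{for} loops in Algorithm~\ref{mainalgo} range over the snapshot of $ENUM$ at the beginning of iteration $n$, not over trees appended during that iteration; with this convention the counting above is exact and the bound follows directly.
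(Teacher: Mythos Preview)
Your proposal is correct and follows essentially the same route as the paper: split by whether $v_n$ is a leaf (lines~8--11 contribute $2\,T(n-1)$) or a non-leaf (lines~12--17 contribute at most $(|C_n|-2)\,T(n-1)$), then sum. You are in fact more careful than the paper's own proof, which leaves implicit both the interpretation of $|C_n|$ as a maximum over the trees processed and the snapshot convention for $ENUM$ that you rightly flag.
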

\begin{proof}
From Theorem \ref{lb}, we know that the number of spanning trees in which $n^{th}$ vertex appears as a leaf is $2.T(n-1)$.  
Note that during $i^{th}$ iteration, when $v_i$ is augmented as a non-leaf to a tree $T$ in $ENUM$, a cycle $C_i$ is created and the removal of any edge in $C_i$ other than the edges incident on $v_i$ creates a new spanning tree.  This shows that for each spanning tree on $(i-1)$ vertices in the set $ENUM$, our algorithm constructs $|C_i|-2$ more spanning trees on $i$ vertices.  From the previous theorem, we know that there are $2.T(n-1)$ spanning trees with $v_n$ as a leaf. Therefore, the claim follows.\qed
\end{proof}
\subsection{Implementation and Run-time Analysis}
\label{implementation}
In this section, we describe the data structures used to implement our algorithm and using which we also analyze the time complexity of our algorithm.  From the input 2-tree, the first task is to get a 2-simplicial ordering.  Towards this end, we maintain a hash table $H1$ indexed by vertex label and to the cell $v_i$ corresponding to vertex $v_i$ in $H1$, we store the higher neighbourhood of $v_i$.  We also maintain a stack $S1$ to store the 2-simplicial ordering.  We populate both $S1$ and $H1$ during Algorithm \ref{ordering}; while removing a 2-simplicial vertex $v$, it is pushed into $S1$ and its higher neighbourhood is stored in $H1$.  Identification of 2-simplicial vertex can be done in $O(n)$ time and population of $S1$ and $H1$ incurs $O(1)$ time.  Therefore, the overall effort for $n$-iterations is $O(n^2)$.  At iteration $i$, the top of $S1$ contains the vertex $v_i$ which can be fetched in $O(1)$ time.\\
We maintain two dynamic queues $Q1$ and $Q2$ to keep track of the trees generated.  Both queues contain a list of pointers and each pointer points to a tree.  Initially $Q1$ is populated with three pointers corresponding to spanning trees of the base graph and each pointer points to a list of edges of the respective tree.  i.e., $Q1$ contains three pointers namely $T1,T2,T3$ and $T1$ points to a list $\{e_1,e_2\}$ of edges.  We only maintain a list of edge labels with each tree pointer.   Spanning trees for the next iteration are generated using these three spanning trees and newly created spanning trees are stored in $Q2$.  $Q2$ will also contain a list of pointers and each tree pointer points to the list of edge labels of the tree.  Using $Q2$, we generate the next set of spanning trees and that would be stored in $Q1$.  In each iteration, we make use of $Q1(Q2)$ to generate the next set of spanning trees which would be stored in $Q2(Q1)$.  At the end, either $Q1(Q2)$ contains the set of spanning trees of the given 2-tree. \\
During $i^{th}$ iteration, for a tree $T$ in $Q1(Q2)$, creating a new tree from $T$ with $v_i$ as a leaf incurs $O(n)$ time.  This is true because, using $H1$, we can get the higher neighbourhood of $v_i$ in constant time.  Since the size of the higher neighbourhood is two, $T$ creates two more  spanning trees with $v_i$ as a leaf.  Since we are adding an edge label to the already existing list of $T$, this task requires $O(n)$ time and is pushed into $Q2(Q1)$.  While creating a new tree from $T$ with $v_i$ as a non-leaf, Algorithm \ref{mainalgo} first checks whether the higher neighbourhood of $v_i$ is non-adjacent in $T$, which can be done in $O(n)$-time.  Further, Algorithm \ref{mainalgo} creates a graph $H$ in which there exists a cycle with $v_i$. Cycle detection can be done using standard Depth First Search algorithm in $O(n)$ time.  Moreover, we get $|C|-2$ more spanning trees from $T$ which would be pushed into $Q2(Q1)$, where $|C|$ denotes the length of cycle in the graph $H$.  The above task can be done in $O(n)$ time.  Therefore, the overall time complexity of our algorithm is $O(|ENUM|.n)$, where $ENUM$ is the set of spanning trees generated by our algorithm for a 2-tree. 
\section{Parallel Algorithm for Spanning Tree Enumeration in 2-trees}
\label{parallel}
The overall structure of our sequential algorithm naturally gives a parallel algorithm to enumerate all spanning trees in 2-trees. Towards this end, we make use of $O(2^n)$ processors and our implementation is based on CREW PRAM.  We first generate 2-simplicial ordering of a 2-tree using $O(n)$ processors.  Each processor incurs $O(1)$ time to check whether a vertex is simplicial or not.  Following this, we generate all spanning trees iteratively starting from the base graph.  In each iteration $i$, we make use of $O(2^i)$ processors and each processor incurs $O(n)$ time in this enumeration process.   The implementation of Algorithm \ref{parallelenumeration} is similar to the implementation discussed in Section \ref{implementation} and using which, we see that each processor incurs $O(n)$ time effort in the enumeration process.  Overall, there are $O(2^n)$ processors being used by our algorithm.  
\begin{algorithm}
\caption{Parallel Algorithm to generate 2-simplicial ordering of a 2-tree {\em parallel-2-simplicial-ordering(G)}}
\label{parallelsimpordering}
\begin{algorithmic}[1]
\STATE{{\tt /* Use $O(n)$ processors to output 2-simplicial ordering */}}
\WHILE{$|V(G)| \geq 4$}
\STATE{Let $\{P_1,\ldots,P_x\}$ be the set of processors, $x=|V(G)|$ and $x=O(n)$. The set $S$ contains the ordering.}
\STATE{$P_i$ checks whether $v_i$ is a simplicial vertex or not.  If $v_i$ is simplicial, then $P_i$ adds $v_i$ to $S$ and remove $v_i$ from $G$.}
\STATE{Also, maintain a table containing higher neighbourhood of $v_i$.}
\ENDWHILE
\STATE{Using a single processor augment the set $S$ with $\{v_3,v_2,v_1\}$ and Return $S$}
\end{algorithmic}
\end{algorithm}
\begin{algorithm}
\caption{Parallel Enumeration of Spanning Trees in 2-trees {\em Parallel-Enumeration-Spanning-Trees (G)}}
\label{parallelenumeration}
\begin{algorithmic}[1]
\STATE{Get 2-simplicial ordering: {\em parallel-2-simplicial-ordering(G)}}
\STATE{Use one processor to generate spanning trees for the base graph which is a 2-tree on 3-vertices: {\em Spanning-trees-base-graph()} and update the set $ENUM$, the set of spanning trees of $G$}
\FOR{$i$ $=$ $4$ to $n$}
\STATE{{\tt /* Each iteration uses $O(2^i)$ processors */}}
\WHILE{there are spanning trees on $(i-1)$ vertices in $ENUM$}
\STATE{Let $P=\{P_1,\ldots,P_x\}, x=O(2^i)$ be the set of processors such that $P_j$ focuses on the $j^{th}$ spanning tree $T_j$ in $ENUM$ }
\STATE {{\tt /* generate spanning trees with $v_i$ as a leaf */}}
\STATE{$P_j$ adds the edge $\{v_i,x\}$ to get $T^j_1$ and adds the edge $\{v_i,y\}$ to get $T^j_2$ where $\{x,y\}$ is the higher neighbourhood of $v_i$.  Add $T^j_1$ and $T^j_2$ to $ENUM$}
\STATE {{\tt /* generate spanning trees with $v_i$ as a non-leaf */}}
\STATE{Each processor in $P$ identifies a spanning tree $T$ in $ENUM$ such that $\{x,y\} \notin E(T)$, where $\{x,y\}$ is the higher neighbourhood of $v_i$ }
\STATE{Add $v_i$ to $T$ to get a graph $H$ i.e., $V(H)= V(T) \cup \{v_i\}$ and $E(H)=E(T)\cup \{\{v_i,x\},\{v_i,y\}\}$}
\STATE{Let $C$ be the cycle in $H$ of length $k \geq 4$ with $V(C)=\{w_1=v_i,w_2=x,w_3,\ldots,w_k=y\}$ }
\FOR{each edge $\{w_i,w_j\}$ in $C$ such that $i \neq 1$,$j \neq 1$}
\STATE{Delete $\{w_i,w_j\}$ to get a tree $T'$ from $H$ and augment $T'$ to $ENUM$ }
\ENDFOR
\ENDWHILE
\ENDFOR
\end{algorithmic}
\end{algorithm}
\newpage
\section{Related Problem: Enumeration of Perfect Elimination Orderings}
In the earlier section, as part of enumeration process, we presented an algorithm to list 2-simplicial ordering of a 2-tree.   It is easy to observe that for a 2-tree, there is more than one 2-simplicial ordering.  Having looked at enumeration problem in this paper, it is natural to think of enumeration of 2-simplicial ordering of a 2-tree.  In fact, in this section, we look at this question in larger dimension.  Towards this end, we consider listing all perfect elimination orderings of a chordal graph.  We reiterate the fact that 2-trees are a subclass of chordal graphs.
\begin{algorithm}
\caption{Enumeration of Perfect Elimination Orderings of a Chordal graph {\em Enumerate-PEOs(Graph $G$)}}
\label{peoenumeration}
\begin{algorithmic}[1]
\STATE{{\tt Input:} Chordal Graph $G$}
\STATE{{\tt Output:} Enumerate PEOs of $G$}
\IF{$G$ is a clique on $l>0$ vertices}
\STATE{Return all permutations of the set $\{v_1,\ldots,v_l\}$}
\ELSE
\STATE{{\tt /* Recursively remove simplicial vertices till the graph becomes a clique */} }
\FOR{each simplicial vertex $v$ in $G$}
\STATE{Remove $v$ from $G$}
\STATE{Call {\em Enumerate-PEOs(G)}}
\ENDFOR
\STATE{{\tt /* Order in which simplicial vertices are removed along with possible permutations of the base clique yields all PEOs */}}
\ENDIF
\end{algorithmic}
\end{algorithm}
\newpage
\subsection{Bounds on the number of PEOs}
It is well-known that in any non-complete chordal graph, there exist at least two non-adjacent simplicial vertices \cite{golu}.  This shows that the number of PEOs is at least \\
$T(n) \geq 2.T(n-1)$, where $T(n)$ denotes the number of PEOs on $n$-vertex chordal graph, $T(3)=6$ \\
Solving for $T(n)$ gives $T(n)= \Omega (2^n)$. \\
For a complete chordal graph, every vertex is simplicial and therefore, the number of PEOs is just the number of permutations of the vertex set which is $O(n!)$.
\subsection{Parallel Algorithm for PEO Enumeration}
\begin{algorithm}
\caption{Parallel Enumeration of Perfect Elimination Orderings of a Chordal graph {\em Parallel-Enumerate-PEOs(Graph $G$)}}
\label{parallelpeoenumeration}
\begin{algorithmic}[1]
\IF{$G$ is a clique on $l>0$ vertices}
\STATE{Use $O(n)$ processors in parallel and return all permutations of the set $\{v_1,\ldots,v_l\}$.  Processor $i$ generates all permutations with $v_i$ as the first vertex.}
\ELSE
\STATE{{\tt /* Recursively remove simplicial vertices till the graph becomes a clique */}}
\STATE{{\tt /* Use $O(2^n)$ processors in parallel */}}
\FOR{each simplicial vertex $v$ in $G$}
\STATE{Remove $v$ from $G$}
\STATE{Call {\em Enumerate-PEOs(G)}}
\ENDFOR
\STATE{{\tt /* Order in which simplicial vertices are removed along with possible permutations of the base clique yields all PEOs */}}
\ENDIF
\end{algorithmic}
\end{algorithm}
\noindent
{\bf Implementation Details:} Given a chordal graph $G$, both Algorithm \ref{peoenumeration} and Algorithm \ref{parallelpeoenumeration} list all perfect elimination orderings of $G$.  This is true because, in both the algorithms, we first find a set $\{v_1,\ldots,v_k\}, k \geq 2$ of simplicial vertices in $G$ and using which we recursively list all PEOs.  We make use of {\em tree} data structure to store all PEOs.  The root of tree $T$ is labelled with $G$ and its neighbours are $G_1,\ldots,G_k$, where $G_i$ corresponds to the graph obtained from $G$ by removing the simplicial vertex $v_i$ and the edges in $T$ store the labels of simplicial vertices being removed at that recursion.  Similarly, for the node in $T$ corresponding to $G_i$, its neighbours are graphs $\{H_1,\ldots,H_k\}$ where $G_i$ contains $k$ simplicial vertices and $H_j$ corresponds to the graph obtained from $G_i$ by removing the simplicial vertex $v_j$.  When the recursion bottoms out, the paths from the root to leaves precisely give all PEOs.  For parallel algorithm, we make use of $O(2^n)$ processors as the lower bound on the number of PEOs is $\Omega(2^n)$.  
\section{Summary and Directions for Further Research}
In this paper, we have presented a novel approach to enumerate spanning trees of a 2-tree from both sequential and parallel perspective.  Our parallel approach can be implemented using $O(2^n)$ processors.  A natural extension of this approach would be to enumerate spanning trees of chordal graphs using PEO as a tool.  We have also looked at the enumeration of PEOs of chordal graphs both from sequential and parallel perspective.  The iterative approach proposed here naturally yields a  parallel algorithm and we believe that this technique can be used to discover parallel algorithms for other enumeration problems such as maximal independent set, minimal vertex separator, etc.

\end{document}